\newtheorem{theorem}{Theorem}
\newtheorem{proposition}[theorem]{Proposition}
\begin{document}
\title{Finding Maximum Cliques on the D-Wave Quantum Annealer}
\author{Guillaume Chapuis\\Los Alamos National Laboratory
\and Hristo Djidjev (PI)\\Los Alamos National Laboratory
\and Georg Hahn\\Imperial College, London, UK
\and Guillaume Rizk\\INRIA/Irisa, Rennes Cedex, France}
\date{}
\maketitle

\begin{abstract}
This paper assesses the performance of the D-Wave 2X (DW) quantum annealer for finding a maximum clique in a graph, one of the most fundamental and important NP-hard problems. Because the size of the largest graphs DW can directly solve is quite small (usually around 45 vertices), we also consider decomposition algorithms intended for larger graphs and analyze their performance. For smaller graphs that fit DW, we provide formulations of the maximum clique problem as a quadratic unconstrained binary optimization (QUBO) problem, which is one of the two input types (together with the Ising model) acceptable by the machine, and compare several quantum implementations to current classical algorithms such as simulated annealing, Gurobi, and third-party clique finding heuristics. We further estimate the contributions of the  quantum phase of the quantum annealer and the classical post-processing phase typically used to enhance each solution returned by DW. We demonstrate that on random graphs that fit DW, no quantum speedup can be observed compared with the classical algorithms. On the other hand, for instances specifically designed to fit well the DW qubit interconnection network, we observe substantial speed-ups in computing time over classical approaches.
\end{abstract}

\section{Introduction}
\label{section_introduction}
The emergence of the first commercially available quantum computers by D-Wave Systems, Inc.~\citep{dwave2016} has provided researchers with a new tool to tackle NP-hard problems for which presently, no classical polynomial-time algorithms are known to exist and which can hence only be solved approximately (with the exception of very small instances which can be solved exactly).

One such computer is D-Wave 2X, which we denote here as DW. It has roughly $1000$ units storing quantum information, called \textit{qubits}, which are implemented via a series of superconducting loops on the DW chip. Each loop encodes both a 0 and 1 (or, alternatively, -1 and +1) value at the same time through two superimposed currents in both clockwise and counter-clockwise directions until the annealing process has been completed and the system turns classical \citep{Johnson2011,Bunyk2014}.

The device is designed to minimize an unconstrained objective function consisting of a sum of linear and quadratic binary contributions, weighted by given constants. Specifically, it aims at minimizing the \textit{Hamiltonian}
\begin{equation}
  H=H(x_1,\dots,x_N)=\sum_{i \in V} a_i x_i + \sum_{(i,j) \in E} a_{ij} x_i x_j \label{eq:hamilt}
\end{equation}
with variables $x_i\in\{0,1\}$ and  coefficients $a_i$, $a_{ij} \in \mathbb{R}$, where $V = \{1,\ldots,N\}$ and $E=V \times V$ \citep{King2015}. This type of problem is known as a \textit{quadratic unconstrained binary optimization (QUBO)} problem. When the coefficients $a_i$ and $a_{ij}$ are encoded as capacities of the \textit{couplers} (links) connecting the qubits, $H$ describes the quantum energy of the system: During annealing, the quantum system consisting of the qubits and couplers tries to settle in its stable state, which is one of a minimum energy, i.e., of a minimum value of $H$. In order to solve a given optimization problem, one has to encode it as a minimization problem of a Hamiltonian of type \eqref{eq:hamilt}. 

Similarly to the random moves considered in a simulated annealing classical algorithm, a quantum annealer uses quantum tunneling to escape local minima and to find a low-energy configuration of a physical system (e.g., constructed from an optimization problem). Its use of quantum superposition of $0$ and $1$ qubit values enables a quantum computer to consider and manipulate all combinations of variable values simultaneously,  while its use of quantum tunneling allows it to avoid hill climbing, thus giving it a potential advantage over a classical computer. However, it is unclear if this potential is realized by the current quantum computing technology, and by the DW computer in particular, and whether DW provides any quantum advantage over the best available classical algorithms \citep{Ronnow2014,Denchev2016}.

This article tries to answer these questions for the problem of finding a maximum clique (MC) in a graph, an important NP-hard problem with multiple applications including network analysis, bioinformatics, and computational chemistry. Given an undirected graph  $G = (V, E)$, a \textit{clique} is a subset $S$ of the vertices forming a complete subgraph, meaning that any two vertices of $S$ are connected by an edge in $G$. The clique size is the number of vertices in $S$, and the maximum clique problem is to find a clique with a maximum number of vertices in $G$ \citep{Balas1986}. 

We will consider formulations of MC as a QUBO problem and study its implementations on DW using different tools and strategies. We will compare these implementations to several classical algorithms on different graphs and try to determine whether DW offers any quantum advantage, observed as a speedup over classical approaches.

The article is organized as follows. Section~\ref{section_relatedwork} starts with an overview of related work that aims to solve graph and combinatorial problems with DW, in particular previous work on the maximum clique problem. Section~\ref{section_methods} proceeds by introducing the qubit architecture on the DW chip as well as available software tools. We also describe a QUBO formulation of MC together with its implementations on DW and present methods for dealing with graphs of sizes too large to fit onto the DW chip. Section~\ref{sect:experiments} presents an experimental analysis of the quantum software tools and a comparison with several classical algorithms, both for graphs small enough to fit DW directly as well as for larger graphs for which decomposition approaches are needed. We conclude with a discussion of our results in Section~\ref{section_conclusion}.

In the rest of the paper, we denote a graph as $G=(V,E)$, where $V = \{ 1,\ldots,N \}$ is a set of $N \in \mathbb{N}$ vertices and $E$ is a set of undirected edges.

\section{Related work}
\label{section_relatedwork}
Several publications available in the literature aim at searching for a quantum advantage within a variety of problem classes. Existing publications often target a particular (NP-complete) problem and compare the performance of a quantum annealer (by D-Wave Systems, Inc.~\citep{dwave2016}) to state-of-the-art classical or heuristic solvers. Early examples include \textit{multiple query optimization} in databases, analyzed by \cite{TrummerKoch2015}, who investigate scaling behavior and show a speed-up of several orders of magnitude over classical optimization algorithms, and \cite{Cao2016}, who in contrast do not detect any quantum speedup for the \textit{set cover with pairs} problem, one of Karp's $21$ NP-complete problems. Other work include graph partitioning via quantum annealing on DW in the context of QMD (quantum molecular dynamics) applications \citep{Mniszewski2016,Ushijima2017}, in which graph partitioning is shown to reduce the computational complexity of QMD. Test sets for integer optimization are investigated in \cite{Coffrin2017}, who observe an advantage of DW over Gurobi~\citep{gurobi} both in terms of speed and quality of solution.

In \cite{Stollenwerk2015}, a general introduction to the DW architecture and the representation of problem instances in Ising and QUBO format is given as well as a QUBO formulation for the maximum clique problem. However, the authors do not actually report any computation results for finding maximum cliques on DW, nor do they compare DW to state-of-the-art heuristic solvers. In contrast to \cite{Stollenwerk2015}, we solve the maximum clique problem on DW for a variety of test graphs and compare its solutions to the ones of state-of-the-art classical solvers. Moreover, we present a graph splitting algorithm allowing to solve problem sizes larger than those embeddable on DW, analyze its scaling behavior, and investigate the influence of alternative QUBO formulations on the solution quality.

In \cite{Boothby2016}, the authors consider finding large clique minors in the DW hardware \textit{Chimera} graph $C(m,n,l)$, defined as the $m \times n$ grid of $K_{l,l}$ complete bipartite graphs (also called \textit{unit cells}). The authors present a polynomial time algorithm for finding clique minors in the special case of the Chimera graph only. Such clique minors are needed to embed problem instances of arbitrary connectivity onto the current and future Chimera architectures, given the problem size is not larger than the clique minor. In contrast, in the present article we consider finding maximum cliques in arbitrary graphs with DW by minimizing a QUBO for the maximum clique problem (applied to the user-specified arbitrary input graph). This step requires an embedding of our QUBO onto DW's Chimera graph, for which the algorithm of \cite{Boothby2016} can be beneficial. However, the work of \cite{Boothby2016} does not substitute for the embedding and minimization of a QUBO when finding cliques in arbitrary graphs as considered in our work.

Instead of attempting a full solution via DW, other publications propose using a quantum annealer to assist in finding a solution of certain problem classes, which often are of a practical and thus more complex nature. For instance, \cite{DridiAlghassi2016} consider computing the (algebraic) homology of a data point cloud and propose to reduce this computation to a minimum clique covering, which is then suggested to be solved using DW. No empirical results are presented. A real-world application (the network scheduling problem) is considered in \cite{Wang2016}. The authors demonstrate an advantage of quantum over simulated annealing; moreover, they show how to obtain more admissible solutions with DW by introducing an additional weight into the QUBO that incrases the gap between linear and quadratic QUBO terms. In \cite{NguyenKenyon2017}, a sparse coding model is trained using samples obtained via DW from a Hamiltonian with $L_p$ sparseness penalty. A graph flow problem in real-world traffic network analysis is considered in \cite{Neukart2017}, who employ gps coordinates of cars in Beijing as training data.

Another class of publications is concerned with the theory of quantum annealing and the problem of benchmarking quantum computations. For instance, \cite{RogersSingleton2016} empirically verify the known phase transition in magnetization for the 2D Ising model with DW. In \cite{Thulasidasan2016}, the author proposes to run Markov Chain Monte Carlo using samples generated by DW from a suitable Boltzmann distribution. The question whether random spin-glass problems are a suitable type of problem to detect a quantum advantage over classical approaches is considered in \cite{PerdomoOrtiz2017}, who also study the problem of benchmarking quantum annealing vs.\ classical CMOS computation.

\section{Solving MC on D-Wave}
\label{section_methods}
This section introduces the DW chip architecture and briefly presents three tools provided by D-Wave Inc.\ to submit quadratic programs to the quantum computer. We also introduce the QUBO formulation of MC needed to submit an MC instance to DW. The section concludes with an algorithmic framework designed to solve instances of MC which are not embeddable on DW.

\subsection{DW hardware and software}
\subsubsection{The Qubit architecture}
\label{section_architecture}
DW operates on roughly $1000$ qubits. The precise number of available qubits varies from machine to machine (even of the same type) due to manufacturing errors which render some of the qubits inoperative. The qubits are connected using a specific type of network called  \textit{Chimera} graph, $C_{12,12,4}$ (see Fig.~\ref{fig:chimera}), comprised of a lattice of $12\times 12$ cells, where each cell is a $4\times 4$ complete bipartite graph. DW can naturally solve Ising and QUBO problems where non-zero quadratic terms are represented by an edge in the Chimera graph.

\begin{figure}
  \centering
  \includegraphics[width=0.5\textwidth]{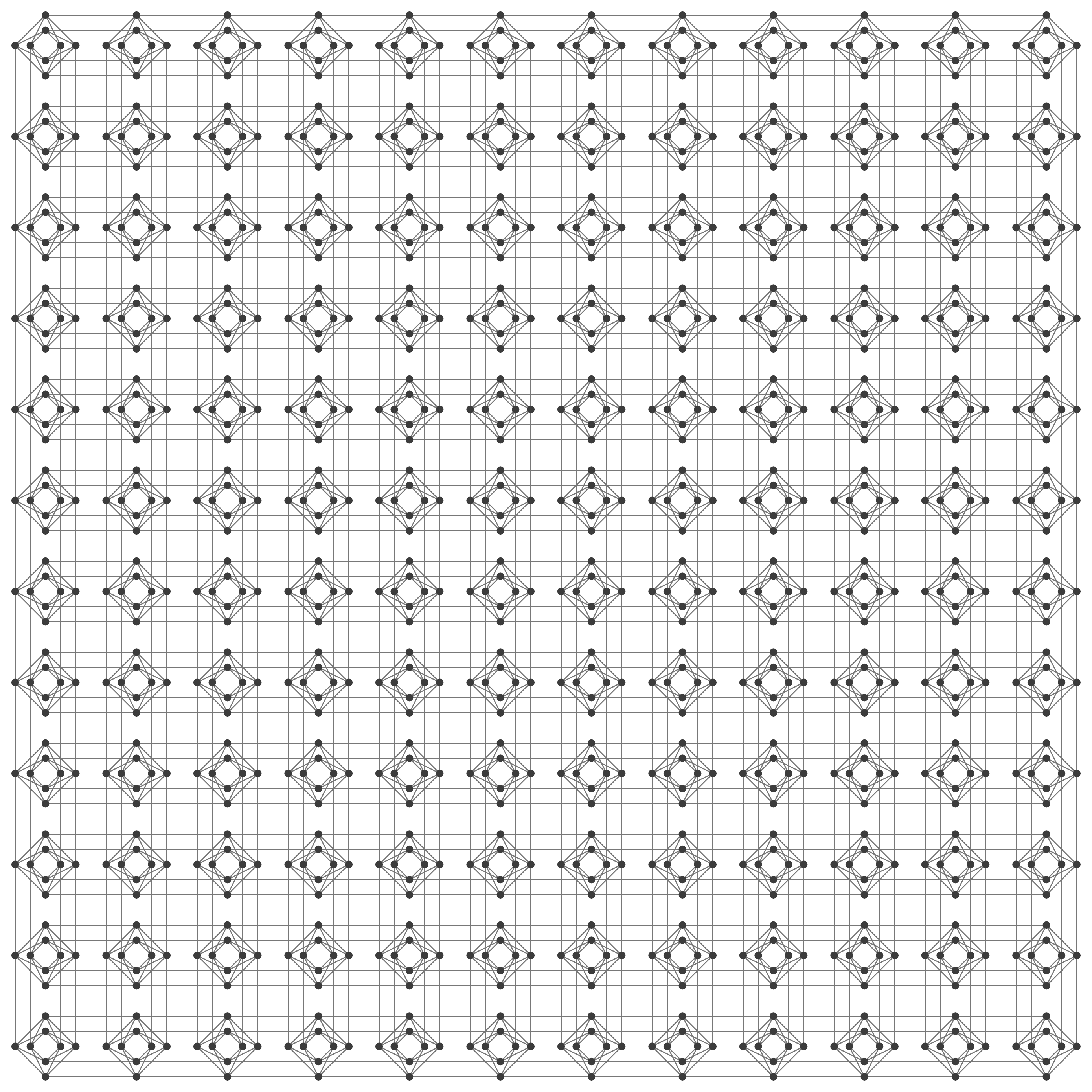}
  \caption{The Chimera $C_{12,12,4}$ graph of 1152 vertices (qubits) and 3360 edges (couplers). LANL's D-Wave 2X chip has usable only 1095  qubits and 3061 couplers due to manufacturing defects.}
  \label{fig:chimera}
\end{figure}

The particular architecture of the qubits implies two important consequences: First, the chip design actually only allows for direct pairwise interactions between two qubits which are  physically adjacent on the chip. For pairwise interactions between qubits not physically connected, a \emph{minor embedding} of the graph describing the non-zero structure of the Hamiltonian matrix into the Chimera type graph is needed,  which maps a logical variable into one or several physical qubits on the chip. Minor embeddings are hence necessary to ensure arbitrary connectivity of the logical variables in the QUBO. The largest complete graph that the DW can embed in theory has $1+4 \cdot 12 = 49$ vertices. In practice, the largest embeddable graph is slightly smaller ($n\approx 45$) due to missing qubits arising in the manufacturing stage.

When more than one qubit is used to represent a variable, that set of qubits is called a \textit{chain}. The existence of chains has two vital consequences, which will play an important role in the analyses of Section~\ref{sect:experiments}. On the one hand, the need for chains uses up qubits, which would otherwise be available to represent more variables in the quadratic program, thereby reducing the maximum problem sizes that can directly be solved on DW. This is the reason for the relatively small sizes of $N=45$ for  QUBO problems \eqref{eq:hamilt} that fit onto DW when the corresponding Hamiltonians are dense (contain nearly all quadratic terms), despite the fact that more than 1000 qubits are available in DW.

On the other hand, due to the imperfections of the quantum annealing process caused by environmental noise, limited precision, and other shortcomings, solutions returned by D-Wave do not always correspond to the minimum energy configuration. In the case of chains, all qubits in a chain encode the same variable in \eqref{eq:hamilt} and hence should have the same value, but for the reasons outlined above this may not be the case. This phenomenon is called a \textit{broken chain}, and it is not clear which value should be assigned to a variable if its chain is broken. Clearly, chains can be ensured to not break by increasing their  coupler weights, but as we will see in the next section this may significantly reduce the accuracy of the solver.

\subsubsection{D-Wave solvers}
\label{section_dwave_solvers}
D-Wave Inc.\ provides several tools that help users submit their QUBO problems to the quantum processor, perform the annealing, apply necessary pre- and post-processing steps, and format the output. This section briefly describes several such tools used in this article.

\paragraph{Sapi}
Sapi stands for Solver API and provides the highest level of control one can have over the quantum annealer. It allows the user to compute minor embeddings for a given Ising or QUBO problem, to choose the number of annealing cycles, or to specify the type of post-processing. Sapi interfaces for the programming languages \textit{C} and \textit{Python} are available. One can also use a pre-computed embedding of a complete $45$-vertex graph, thus avoiding the need to run the slow embedding algorithm.

\paragraph{QBsolv}
QBsolv is a tool that can solve problems in QUBO format which are of a size that cannot natively fit onto DW. Larger problems (with more variables or more connections than can be mapped onto the corresponding Chimera graph) are analyzed by a hybrid algorithm, which identifies a small number of significant rows and columns of the Hamiltonian. It then defines a  QUBO on that subset of variables which fits DW, solves it, and extends the found solution to a solution of the original problem.

\paragraph{QSage}
In contrast to Sapi or QBsolv, QSage is a blackbox hybrid solver which does not require a QUBO or Ising formulation as input. Instead, QSage is able to minimize any function operating on a binary input string of arbitrary size. For this it uses a tabu search algorithm enhanced with DW-generated low-energy samples near the current local minimum. To ensure that also input sizes larger than the DW architecture can be processed, QSage optimizes over random substrings of the input bits.

\subsection{QUBO formulations of MC}
\label{sec:maxc}
Recall that a QUBO problem can be written as 
\begin{equation}\label{eq:qubo}
\begin{aligned}
& \underset{x_i \in \{0,1\} }{\text{minimize}}
& \!\!\!\!&H=\sum_{1\leq i<j\leq N} a_{ij} x_i x_j,
\end{aligned}
\end{equation}
where the weights $a_{ij}$, $i \neq j$, are the quadratic terms and $a_{ii}$ are the linear terms (since $x_i^2=x_i$ for $x_i\in\{0,1\}$).

There are multiple ways to formulate the MC problem as a QUBO. One of the simplest is based on the equivalence between MC and the maximum independent set problem. An independent set $S$ of a graph $H$ is a set of vertices with the property that for any two vertices $v,w\in S$, $v$ and $w$ are not connected by an edge in $H$. It is easy to see that an independent set of $H = (V, \overline{E})$ defines a clique in graph $G=(V,E)$, where $\overline{E}$ is the complement of set $E$. Therefore, looking for the maximum clique in $G$ is equivalent to finding the maximum independent set in $H$. The corresponding constraint formulation for MC is
\begin{equation}\label{eq:mis}
\begin{aligned}
& \underset{x_i \in \{0,1\} }{\text{maximize}}
& \!\!\!\!&\sum_{i=1}^Nx_i\\
& \text{subject to}
& & \sum_{(i,j)\in \overline{E}} x_ix_j=0,
\end{aligned}
\end{equation}
where $G=(V,E)$ is the input graph and $\overline{E}$ is the complement of $E$. The equivalent unconstrained (QUBO) minimization of \eqref{eq:mis}, written in the form \eqref{eq:qubo}, is
\begin{equation}
\label{eq:mis2}
H=-A\sum_{i=1}^Nx_i+B\sum_{(i,j)\in \overline{E}} x_ix_j,
\end{equation}
where one can determine that the coefficients/penalties $A$ and $B$ can be chosen as $A=1$, $B=2$ (see \cite{Lucas2014}). A disadvantage of the formulation \eqref{eq:mis2} is that $H$ contains an order of $N^2$ quadratic terms even for sparse graphs $G$, which limits the size problems for which MC can be directly solved on DW.

\subsection{Solving larger MC instances}
\label{section_large_instances}
To solve the MC problem on an arbitrary graph, we develop several algorithms that reduce the size of the input graph by removing vertices and edges that do not belong to a maximum clique and/or split the input graph into smaller subgraphs of at most 45 vertices, the maximal size of a complete graph embeddable on DW. Let $G(V,E)$ be a connected graph of $n$ vertices.

\subsubsection{Extracting the $k$-core}
The $k$-core of a graph $G=(V,E)$ is the maximal subgraph of $G$ whose vertices have degrees at least $k$. It is easy to see that if $G$ has a clique $C$ of size $k+1$, then $C$ is also a clique of the $k$-core of $G$ (since all vertices in a $k$-clique have degrees $k-1$). Therefore, finding a maximum clique of size no more than $k+1$ in the original graph $G$ is equivalent to finding such a clique in the $k$-core of $G$ (which might be a graph of much smaller size).

One can compute the $k$-core iteratively by picking a vertex $v$ of degree less than $k$, removing $v$ and its adjacent edges,  updating the degrees of the remaining vertices, and repeating while such a vertex $v$ exists. The algorithm can be implemented in optimal $O(|E|)$ time~\citep{DBLP:journals/corr/cs-DS-0310049}. 

We also apply another reduction approach, which we refer to as edge $k$-core, to reduce the size of an input graph $g$ using a known lower bound \textit{lower\_bound} on the clique size. This approach combining $k$-core and edge $k$-core is given in pseudo-code notation as Algorithm~\ref{alg:Graph-reducing-algorithm}.

\begin{algorithm}
  \caption{\texttt{Graph reducing $k$-core based algorithm}\label{alg:Graph-reducing-algorithm}}
  \SetKwInOut{Input}{input}
  \SetKwInOut{Output}{output}
  \SetKwProg{proc}{def}{}{}
  \proc{\textnormal{reduce\_graph(Graph g, int lower\_bound):}}{
    extract\_k\_core(g, lower\_bound)\\
    Vertex v = choose\_random\_vertex(g)\\
    \For{\textnormal{\textbf{each} vertex n in neighbors(g, v)}}{
      Set nv = neighbors(g,v)\\
      Set nn = neighbors(g,n)\\
      List common\_neighbors = intersection(nv, nn)\\
      \If{\textnormal{length(common\_neighbors) $<$ lower\_bound-2}}{
	remove\_edge(g, v, n)
      }
    }
    extract\_k\_core(g, lower\_bound)
  }
\end{algorithm}

In Algorithm~\ref{alg:Graph-reducing-algorithm}, we first aim to reduce the size of $g$ by simply extracting its $k$-core, where $k$ is set to the currently known lower bound. It is easily shown that for two vertices $v,w$ in a clique of size $c$, the intersection of the two neighbor lists of $v$ and $w$ has size at least $c-1$. We therefore choose a random vertex $v$ in $sg$ and remove all edges $(v,e)$ satisfying $|N(v) \cap N(e)|<lower\_bound-2$ (here $N(v)$ denotes the set of neighbor vertices of $v$), as such edges cannot be part of a clique with size larger than $lower\_bound$. Since this changes the graph structure, we attempt to extract the $lower\_bound$-core at the end again before returning the reduced graph.

\subsubsection{Graph partitioning}
\label{sec:graphpartitioning}
This divide-and-conquer approach aims at dividing $G$ into smaller subgraphs, solves the MC problem in each of these subgraphs, and combines the subproblem solutions into a solution of the original problem. If one uses standard (edge-cut) graph partitioning, which divides the vertices of the graph into a number of roughly equal parts so that the number of \textit{cut edges}, or edges with endpoints in different parts, is minimized, then the third step, combining the subgraph solutions, will be computationally very expensive. Instead, we will use CH-partitioning, recently introduced in \cite{Djidjev2016}. 

In \textit{CH-partitioning}, there are two levels of dividing the vertices of $G$ into subsets. In the \textit{core partitioning}, the set $V$ of vertices is divided into nonempty \textit{core} sets $C_1,\dots,C_s$ such that $\bigcup_iC_i=V$ and $C_i\cap C_j=\emptyset$ for $i\neq j$. There is one \textit{halo} set $H_i$ of vertices for each core set $C_i$, defined as the set of neighbor vertices of $C_i$ that are not from $C_i$. Recall that a vertex $w$ is a neighbor of a vertex $v$ iff there is an edge between $v$ and $w$. We define the cost of the CH-partitioning ${\mathcal P}=(\{C_i\},\{H_i\})$ as 
\begin{equation}
\label{eq:CHcost}
\mbox{cost}({\mathcal P})=\max_{1\leq i\leq s}(|C_i|+|H_i|).
\end{equation}
The \textit{CH-partitioning problem} is finding a CH-partitioning of $G$ of minimum cost. The next statement shows how CH-partitions can be used for solving MC in larger graphs.

\begin{proposition}\label{prop:CHpart}
  Given a CH-partitioning $(\{C_i\},\{H_i\})$ of a graph $G$, the size of the maximum clique of $G$ is equal to $\max_i\{k_i\}$, where $k_i$ is the size of a maximum clique of the subgraph of $G$ induced by $C_i\cup H_i$.
\end{proposition}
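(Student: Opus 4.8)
The plan is to prove the two inequalities separately, establishing that $\max_i k_i$ is both an upper and a lower bound on the clique number of $G$, which I denote $\omega(G)$.

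First I would establish $\max_i k_i \le \omega(G)$. This direction is immediate: each $C_i \cup H_i$ is a subset of $V$, so the subgraph it induces is an induced subgraph of $G$, and therefore any clique in it is also a clique in $G$. Taking a maximum clique of the subgraph induced by $C_i \cup H_i$ yields a clique of $G$ of size $k_i$, hence $k_i \le \omega(G)$ for every $i$, and so $\max_i k_i \le \omega(G)$.

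The substance of the argument is the reverse inequality $\omega(G) \le \max_i k_i$, for which I would show that some single part $C_i \cup H_i$ contains an entire maximum clique. Let $K$ be a maximum clique of $G$ and fix any vertex $v \in K$. Since the core sets $C_1,\dots,C_s$ partition $V$, there is a unique index $i$ with $v \in C_i$. I claim $K \subseteq C_i \cup H_i$. Indeed, take any $w \in K$ with $w \neq v$; because $K$ is a clique, $w$ is adjacent to $v$, so $w$ is a neighbor of the core set $C_i$. If $w \in C_i$ the containment is trivial, and otherwise $w$ is a neighbor of $C_i$ lying outside $C_i$, which is exactly the defining condition for membership in the halo $H_i$. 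Hence every vertex of $K$ lies in $C_i \cup H_i$, so $K$ is a clique of the subgraph induced by $C_i \cup H_i$, giving $\omega(G) = |K| \le k_i \le \max_j k_j$. Combining the two inequalities yields the stated equality.

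I do not anticipate a genuine obstacle; the only point requiring care is the containment step, where one must observe that a single arbitrarily chosen vertex of the clique already pins down which core set's neighborhood captures the remaining clique vertices. Notably, the argument uses nothing about the cost function \eqref{eq:CHcost} or any minimality of the partition---it holds for an arbitrary CH-partitioning, which is precisely what makes it a sound justification for the divide-and-conquer strategy irrespective of how the parts are selected.
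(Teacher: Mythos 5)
Your proof is correct and follows essentially the same route as the paper's: fix a vertex $v$ of a maximum clique $K$, locate its core set $C_i$, and use the clique edges from $v$ to place every other vertex of $K$ in $C_i\cup H_i$. The only difference is that you also spell out the easy inequality $\max_i k_i\le\omega(G)$, which the paper leaves implicit.
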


\begin{proof}
Let $K$ be a maximum clique of $G$ and let $v$ be any vertex of $K$. Since, by definition of CH-partitioning, $\bigcup_iC_i=V$, where $V$ is the set of the vertices of $G$, then $v$ belongs to some core $C_j$. 

We will next show that for any vertex $w\neq v$ from $K$, $w\in C_j\cup H_j$, which will imply that all vertices of $K$ are in $C_j\cup H_j$, implying the correctness of the proposition. 

If $w\in C_j$ then the claim follows. 

Assume that $w\not\in C_j$. We will show that $w\in H_j$. Since $K$ is a clique, there is an edge between any two vertices from it, and hence there is an edge between $v$ and $w$. Since, by definition, $H_j$ consists of all neighbors of vertices from $C_j$ that are not in $C_j$, $v\in C_j$, $w\not\in C_j$,  and $w$ is a neighbor of $v$, then $w\in H_j$.
\end{proof}

Using Proposition~\ref{prop:CHpart}, the solutions to all subproblems of a CH-partitioning can be combined into a solution of the original problem at an additional cost of only $O(s)=O(n)$, where $s$ is the number of the sets of the partition.

One may conjecture that increasing $s$ in \eqref{eq:CHcost} will always reduce the cost, but this is not always the case. If the minimum cost is achieved for $s=1$, or if some of the parts of the partition are still too large, then the method in the next subsection might be applied.

\subsubsection{Vertex splitting}
\label{sec:vertexsplitting}
This method is similar to a special case of the previous one, obtained by choosing $s=2$, letting $C_1$ contain only a single vertex $v$, and letting $C_2$ contain all other vertices $V\setminus \{v\}$. Moreover, while the halo $H_1$ of $C_1$ is defined as above, we set $C_1=\emptyset$ and $H_2=\emptyset$. As a result, $G$ is divided into two subgraphs, $G_1$ containing all neighbors of $v$ without $v$ itself, and $G_2$ containing all vertices of $G$ except $v$, see Fig.~\ref{fig:vertex_splitting}. Because this partitioning is uniquely determined by a single vertex, we call it a \textit{vertex-splitting partitioning}. The cost of such a partitioning is again given by \eqref{eq:CHcost}.

\begin{figure}
  \centering
  \includegraphics[width=0.5\textwidth]{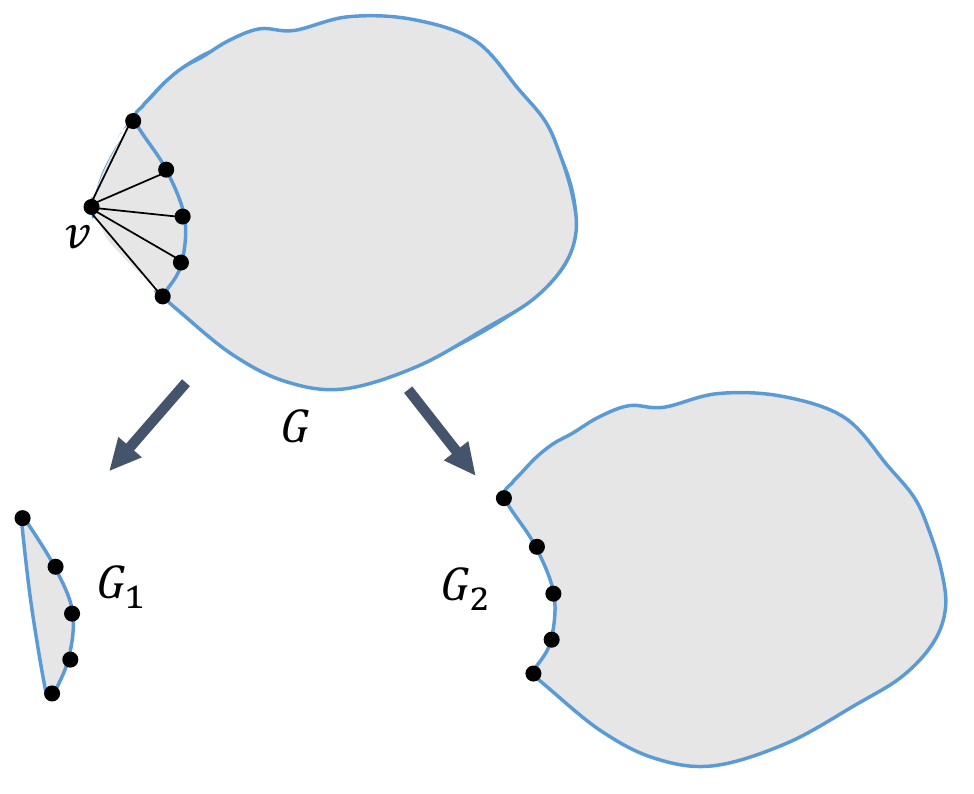}
  \caption{Illustration of the vertex splitting algorithm.}
  \label{fig:vertex_splitting}
\end{figure}

\begin{proposition}
\label{prop:VSpart}
  Given a vertex-splitting partitioning of $G$, $(\{C_1,C_2\},\{H_1,H_2=\emptyset\})$, the size of the maximum clique of $G$ is equal to $\max\{k_1+1,k_2\}$, where $k_i$, $i=1,2$,  is the size of a maximum clique of the subgraph of $G$ induced by $C_i\cup H_i$.
\end{proposition}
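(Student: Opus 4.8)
The plan is to prove the equality by establishing the two inequalities $\omega(G)\ge\max\{k_1+1,k_2\}$ and $\omega(G)\le\max\{k_1+1,k_2\}$, where $\omega(G)$ denotes the size of a maximum clique of $G$. Throughout I will use that, by construction, $G_1$ is the subgraph induced by the neighbor set $H_1=N(v)$ and $G_2$ is the subgraph induced by $C_2=V\setminus\{v\}$, both of which are induced subgraphs of $G$.

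For the inequality $\omega(G)\ge\max\{k_1+1,k_2\}$ I would exhibit cliques of $G$ of the required sizes. A maximum clique of $G_2$ is, since $G_2$ is an induced subgraph of $G$, also a clique of $G$, giving $\omega(G)\ge k_2$. For the other term, I would take a maximum clique $K_1$ of $G_1$; every vertex of $K_1$ lies in $N(v)$ and is therefore adjacent to $v$, and the vertices of $K_1$ are mutually adjacent, so $K_1\cup\{v\}$ is a clique of $G$ of size $k_1+1$, giving $\omega(G)\ge k_1+1$. Taking the maximum of the two bounds yields this direction.

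For the reverse inequality $\omega(G)\le\max\{k_1+1,k_2\}$ I would take a maximum clique $K$ of $G$ and split into two cases according to whether $v\in K$. If $v\notin K$, then $K\subseteq V\setminus\{v\}$, so $K$ is a clique of $G_2$ and $\omega(G)=|K|\le k_2$. If $v\in K$, then every other vertex of $K$ is adjacent to $v$ and hence belongs to $N(v)$; consequently $K\setminus\{v\}$ is a clique of the subgraph induced by $N(v)$, i.e.\ of $G_1$, so $|K|-1\le k_1$ and $\omega(G)=|K|\le k_1+1$. In either case $\omega(G)\le\max\{k_1+1,k_2\}$, which together with the first direction closes the argument.

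The step I expect to require the most care is the case $v\in K$ in the reverse inequality, which is precisely where the ``$+1$'' and the non-standard setup enter. Note that this statement is \emph{not} a literal instance of Proposition~\ref{prop:CHpart}: because we set $C_1=\emptyset$ and $H_2=\emptyset$, the cores no longer cover $V$ (the vertex $v$ is omitted), so $v$ must be accounted for separately rather than by direct appeal to the earlier result. The crux is the observation that cliques of $G$ through $v$ correspond exactly, via removal and re-insertion of $v$, to cliques of the neighborhood-induced subgraph $G_1$, which is what produces the additive term in $k_1+1$.
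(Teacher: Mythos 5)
Your proof is correct and rests on the same two observations as the paper's: cliques of $G$ containing $v$ correspond, after deleting $v$, to cliques of the neighborhood-induced subgraph $G_1$, while cliques avoiding $v$ are exactly the cliques of $G_2$. The only difference is organizational --- you establish the two inequalities $\omega(G)\geq\max\{k_1+1,k_2\}$ and $\omega(G)\leq\max\{k_1+1,k_2\}$ separately, whereas the paper fixes a single maximum clique $K$ (chosen to contain $v$ whenever possible) and argues by cases that $|K|=\max\{k_1+1,k_2\}$; your arrangement is marginally cleaner since it avoids that special choice of $K$, but the mathematical content is the same.
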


\begin{proof}
Let $v$ be the vertex that defines the partition. If there is a maximum clique of $G$ that contains $v$ let $K$ be such a clique, otherwise let $K$ be any maximum cliques of $G$. Consider the following two cases.

Case 1: $v$ belongs to $K$. Then the set of the vertices of $K$ consists of $v$ and a subset $V_1$ of vertices from $G_1$. Moreover, since there is an edge between any two vertices of $K$, there is an edge between any two vertices of $V_1$, which means that $V_1$ defines a clique $K_1$ in $G_1$. Assume that $K_1$ is not a maximum clique of $G_1$, i.e., there exists a clique $K_1'$ in $G_1$ with more vertices than $K$. Then adding $v$ to the vertices of $K_1'$ will result in a clique in $G$ of size larger than $K$, which is a contradiction to the choice of $K$. Hence $K_1$ is a maximum clique in $G_1$, whose size was denoted by $k_1$. Since $K$ consists of $v$ and the vertices of $K_1$, its size is $k_1+1$. Moreover, $G_2$ cannot have a clique larger than $K$ since any clique in $G_2$ is also a clique in $G$. Hence, $k_2\leq |K|=k_1+1$ and $|K|=\max\{k_1+1,k_2\}$.

Case 2: $v$ does not belong to $K$. Then, $K$ is entirely contained in $G_2$ and hence $|K|\leq k_2$. On the other hand, $G_2$ cannot have a larger clique than $|K|$ since any clique in $G_2$ is also a clique in $G$, hence $|K|=k_2$. Moreover, by the choice of $K$, any clique containing $v$ is of size less than $K$, so $|K|>k_1+1$, and therefore $|K|=\max\{k_1+1,k_2\}$.
\end{proof}

Since $H_2=\emptyset$, vertex splitting can be used in cases where CH-partitioning fails. Moreover, if there is a vertex of degree less than $n-1$, this method will always create subproblems of  size smaller than the original one. However, the total number of subproblems resulting from the repeated use of this method can be too large. A more efficient algorithm can be obtained if all the above methods are combined.

\subsubsection{Combining the three methods}
We use the following algorithm to decompose a given input graph $G$ into smaller MC instances fitting the DW size limit. We assume that the size $k+1$ of the maximum clique is known. (Otherwise, use the procedure of this section in a binary-tree search fashion to determine the size of the maximum clique. This increases the running time by a factor $O(\log k)=O(\log n)$ only.) We also have an implementation that, instead of ``guessing'' the \textit{exact} value of $k$, uses lower bounds on $k$ determined by the size of the largest clique found so far. 

The algorithm works in two phases. First, we apply the $k$-core algorithm on the input graph and then CH-partitioning on the resulting $k$-core. Consequently, we keep a list $L$ of subgraphs (ordered by their number of vertices), which is initialized with the output of the CH-partitioning step. In each iteration and until all produced subgraphs fit the (DW) size limit, we choose a vertex $v$ from the largest subgraph $sg$, extract the subgraph $ssg$ induced by $v$ and its neighbors and remove $v$ from $sg$. The $k$-cores of the two subgraphs produced at this iteration are then inserted into $L$. Second, we compute the maximum clique on DW for any subgraph in $L$ of size small enough.

Algorithm~\ref{alg:Graph-splitting-algorithm} gives the pseudo-code of this approach. It returns a list of subgraphs of an input graph $g$ sorted in increasing order of their number of vertices as well as an updated lower bound on the maximum clique size. The parameters of Algorithm~\ref{alg:Graph-splitting-algorithm} are the input graph $g$, a maximal number of vertices \textit{vertex\_limit} for which the maximum clique problem is solved directly on a subgraph, and a lower bound on the clique size found so far (\textit{lower\_bound}). All returned subgraphs have the property that their size is at most \textit{vertex\_limit}. Since the algorithm attemps to solve MC exactly on graphs not larger than \textit{vertex\_limit}, the parameter \textit{vertex\_limit} in our case can be set to the maximal number of vertices embeddable on DW.

\begin{algorithm}
  \caption{\texttt{Graph splitting algorithm}\label{alg:Graph-splitting-algorithm}}
  \SetKwInOut{Input}{input}
  \SetKwInOut{Output}{output}
  \SetKwProg{proc}{def}{}{}
  \proc{\textnormal{split(Graph g, int vertex\_limit, int lower\_bound):}}{
    List subgraphs = [g]\\
    \While{\textnormal{length(subgraphs[-1]) $>$ vertex\_limit}}{
      Graph sg = subgraphs.pop()\\
      Vertex v = choose\_vertex(sg)\label{alg_line:remove_v}\\
      Graph ssg = extract\_subgraph(v, sg)\\
      remove\_vertex(v, sg)\\
      reduce\_graph(sg, lower\_bound)\\
      \If{\textnormal{length(sg) $>$ 0}}{
	\If{\textnormal{length(sg) $<$= vertex\_limit}}{
	  lower\_bound = solve(sg)
	}
	\Else{
	  sorted\_insert(subgraphs, sg)
	}
      }
      reduce\_graph(ssg, lower\_bound)\\
      \If{\textnormal{length(ssg) $>$ 0}}{
	\If{\textnormal{length(ssg) $<=$ vertex\_limit}}{
	  lower\_bound = solve(ssg)
	}
      }
    }
    return subgraphs, lower\_bound
  }
\end{algorithm}

Algorithm~\ref{alg:Graph-splitting-algorithm} works as follows. First, a sorted list of graphs called \textit{subgraphs} (sorted in descending order of the degree of the subgraphs) is created and initialized with $g$. As long as the largest subgraph (denoted as \textit{subgraphs}[-1] in \textit{Python} notation) has at least \textit{vertex\_limit} nodes, the current largest graph \textit{sg} in the list (command \textit{pop()}) is returned, \textit{sg} is removed from list \textit{subgraphs}, and a vertex $v$ is chosen according to some rule specified in the function \textit{choose\_vertex} (see the end of Section~\ref{section_large_instances} for possible approaches). Then, the induced subgraph \textit{ssg} by vertex $v$ is extracted and deleted from \textit{sg}.

A graph reduction step via the function \textit{reduce\_graph} is then applied to \textit{sg} which reduces the size of the graph using the currently known best lower bound \textit{lower\_bound} on the clique size. The graph reduction is given separately as Algorithm~\ref{alg:Graph-reducing-algorithm}.

Suppose \textit{sg} still contains vertices after reduction. If the degree of \textit{sg} after reduction is less than \textit{vertex\_limit}, we attempt to solve the MC problem exactly on \textit{sg} using some function \textit{solve()} (for instance via DW) and update \textit{lower\_bound}. Otherwise, \textit{sg} is inserted again into the list \textit{subgraphs}.

The same step is repeated for the subgraph \textit{ssg} with the exception that \textit{ssg} does not have to be re-inserted into list \textit{subgraphs} at the end. This is because the subgraph induced by a single vertex $v$ either contains a clique or can be removed.

Removing a vertex $v$ in line \ref{alg_line:remove_v} of Algorithm~\ref{alg:Graph-splitting-algorithm} decreases the size of \textit{sg} by one in each iteration, thus the algorithm terminates in finite time once all generated subgraphs have size at most \textit{vertex\_limit}. 

Lastly, we describe our procedure \textit{choose\_vertex(sg)} for choosing the next vertex to be removed from $sg$. A vertex with high degree will potentially greatly reduce the size of $sg$, however at the expense of also producing a large subgraph $ssg$. In order to maximize the impact of removing a vertex, we successively try out three choices: a vertex of highest degree, a vertex of median degree and, if necessary, a vertex of lowest degree in $sg$. If the vertex of lowest degree has degree $|V|-1$, then $sg$ is a clique: In this case, solving MC on $sg$ can be omitted and $lower\_bound$ can be updated immediately.

\section{Experimental analysis}
\label{sect:experiments}
The aim of this section is to investigate if a quantum advantage for the MC problem can be detected for certain classes of input graphs. To this end, we compare the DW solvers of Section~\ref{section_dwave_solvers} to classical ones on various graph instances -- from random small graphs that fit the DW chip to (larger) graphs tailored to perfectly fit DW's Chimera architecture. We also evaluate our graph splitting routine of Section~\ref{section_large_instances} on large MC instances. First we briefly describe classical solvers that will be used in the comparison.

\subsection{Classical solvers}
\label{sec:solvers}
Apart from the tools provided by D-Wave Inc., we employ classical solvers in our comparison, consisting of: A simulated annealing algorithm working on the Ising problem (SA-Ising), a simulated annealing algorithm specifically designed to solve the clique problem (\textit{SA-clique}, see \cite{geng2007simple}), softwares designed to find cliques in heuristic or exact mode (the \textit{Fast Max-Clique Finder} fmc, see \cite{pattabiraman2013fast}), the software tool \textit{pmc} (see \cite{rossi2013fast}), and the Gurobi solver \citep{gurobi}.

\paragraph{SA-Ising}
This is a simulated annealing algorithm working on an Ising problem formulation. The initial solution is a random solution, and a single move in the simulated algorithm is the flip of one random bit.

\paragraph{SA-clique}
We implemented a simulated annealing algorithm specifically designed to find cliques, as described in \cite{geng2007simple}. As SA-clique only finds cliques of a user-given size $m$, we need to apply a binary search on top of it to find the maximum clique size. Its main parameter is a value $\alpha$ controlling the geometric temperature update of the annealing in each step (that is, $T_{n+1} = \alpha T_n$). A default choice is $\alpha=0.9996$. A value closer to $1$ will yield a better solution but will increase the computation time.

\paragraph{Fast Max-Clique Finder (fmc, pmc)}
These two algorithms are designed to efficiently find a maximum clique for a large sparse graph. They provide  exact and heuristic search modes. We use version 1.1 of software \textit{fmc} \citep{pattabiraman2013fast} and \textit{pmc} (github commit 751e095) \citep{rossi2013fast}.

\paragraph{Post-processing heuristics alone (PPHa)}
The DW pipeline includes a post-processing step: First, if chains exist, a majority vote is applied to fix any broken chains. Then a local search is performed to ensure that any solution is indeed a local minimum (the raw solutions coming from DW might not be in a local minimum, see \cite{dwavepostprocess24}). For a given solution coming out of the pipeline, one might wonder what the relative contributions of DW and of the post-processing step are. For some small and simple problems, the post processing step \emph{alone} might be able to find a good solution.

We try to answer this issue by solely applying the post-processing step, and by comparing the result with the one obtained by quantum annealing. However, post-processing by DW runs on the DW server and is not available separately. 

To enable us to still use the DW post-processing alone, we employ the following procedure. We set a very high absolute chain strength (e.g., $1000$ times greater than the largest weight in our Ising problem), and turn on the \textit{auto-scale} feature mapping QUBO weights to the interval $[-1,1]$. Because of the limited precision of the DW hardware (DW maps all QUBO weigths to $16$ discrete values within $[-1,1]$), chain weights will be set to the minimum value $-1$ while all other weights will be scaled down to $0$. In this way, the quantum annealer will only satisfy the chains rather than the actual QUBO we are interested in. As chains will not be connected to other chains, and as all linear terms will be zero, each chain will be assigned a random value $-1$ or $+1$. Applying the DW post-processing step to such a QUBO with large chain weights will therefore result in the post-processing step being called with a random initial solution. We hence expect to obtain results stemming from the post-processing step only (with random starting point). This method will be referred to as PPHa, \emph{post-processing heuristic alone}.

\paragraph{Gurobi}
\textit{Gurobi} \citep{gurobi} is a mathematical programming solver for linear programs, mixed-integer linear and quadratic programs, as well as certain quadratic programs. We employ \textit{Gurobi} to solve given QUBO problems (Ising problems can be solved as well, nevertheless \textit{Gurobi} explicitly allows to restrict the range of variables to binary inputs, making it particularly suitable for QUBO instances). Instead of solving MC directly with Gurobi, we solve the dual problem, that is we computed a maximum independent set on the complement graph.

\subsection{Small graphs with no special structure}
\label{sec:smallgraphs}
\begin{figure}
  \centering
  \includegraphics[width=0.5\textwidth]{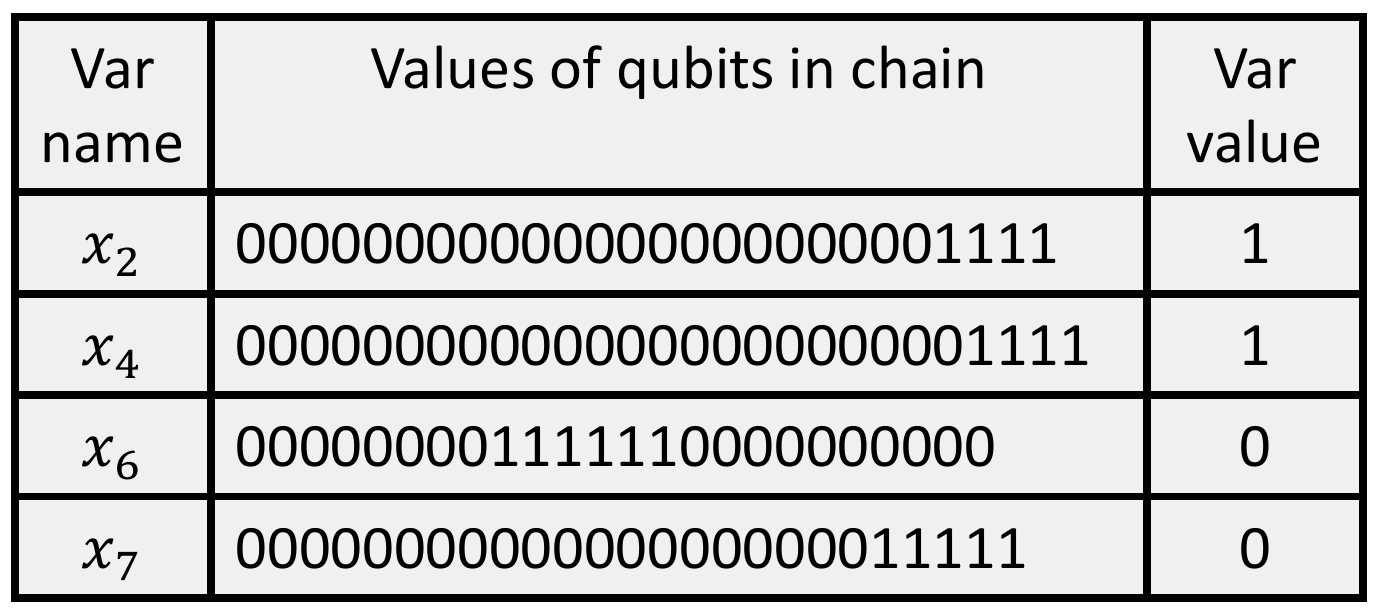}
  \caption{The first four broken chains (out of 16) produced by DW on a test 45-vertex graph. The first column shows the name of the variable the chain corresponds to and the third column gives the correct value for that variable.}
  \label{fig:chains}
\end{figure}

\begin{table*}
  \centering
  \begin{tabular}{|l||c||l|l|l|l|l|l|l|}
  \hline
  Graph & Max.~clique size & \multicolumn{7}{c|}{Runtime [s]}\\
  && Sapi & PPHa & QBsolv & \textit{fmc} & \textit{pmc} &SA & Gurobi\\
  \hline
  p=0.3 & 5	& 0.15 & 0.15 & 0.05 & $8\cdot 10^{-6}$ & $3\cdot 10^{-5}$ & 0.15& 102\\
  p=0.5 & 8	& 0.15 & 0.15 & 0.06 & $3\cdot 10^{-4}$ & $5\cdot 10^{-5}$ & 0.37& 38\\
  p=0.7 & 13	& 0.15 & 0.15 & 0.04 & 0.002 & $8\cdot 10^{-5}$ & 0.19& 33\\
  p=0.9 & 20	& 0.15 & 0.15 & 0.04 & 0.135 & $8\cdot 10^{-5}$ & 0.28& 2\\
  \hline
  \end{tabular}
  \caption{Running time on 45 vertex random graphs. The edge probability used to generate those graphs is given in the first column. Since for such small graphs, every software returned the correct solution, we only report the running times. Gurobi solves the dual problem, leading to reversed graph densities and timings.}
  \label{tab:45g}
\end{table*}

We generate four random graphs with increasing edge densities for our experiments. We considered edge probabilities ranging from $0.3$ to $0.9$ in steps of $0.05$. We compare the execution times of DW using the Sapi interface and the different solvers listed in Section~\ref{section_dwave_solvers} to the classical solvers of Section~\ref{sec:solvers}.

Results are shown in Table~\ref{tab:45g}. For small graphs, every solver returns a maximum clique, therefore the table shows execution times only. We can see that (a) software solvers are much faster than DW, with \textit{pmc} being the fastest by several order of magnitudes; (b) DW and \textit{PPHa} exhibit equal results and execution times. This shows that for these small graphs, even the simple software heuristic included in the DW pipeline is capable of solving the MC problem. The similar performance of DW and PPHa therefore makes it impossible to distinguish between the contributions from the post-processing heuristic and the actual quantum annealer; (c) Gurobi finds the best solution as well (for the dual of MC, the maximum independent set problem, thus timings decrease in the last column of Table~\ref{tab:45g}), but since Gurobi is an exact solver, its running time is higher than the one of the other methods. We note that the timings for Gurobi are for finding the best solution -- letting Gurobi run further to subsequently prove that a found solution is optimal requires a far longer runtime.

Moreover, we have observed in eq.~\eqref{eq:mis2} in Section~\ref{sec:maxc} that the QUBO for MC leads to an order of $N^2$ quadratic terms even for sparse graphs. This in turn typically causes the QUBO matrix to be very dense, making it difficult to embed the QUBO onto the Chimera graph. If embedding the QUBO is indeed possible, then usually at the cost of incurring long chains. This is due to the fact that a dense QUBO necessarily contains a large number of couplers between qubits not adjacent on the DW chip, thus requiring re-routing through chain qubits. In our experiments we observe that this is a delicate case for the quantum annealer: Using high coupler strengths for the chains results in consistent chains after annealing, but comes at the cost of downscaling the actual QUBO weights, thus leading to meaningless solutions. Lower chain strengths often cause many of the chains to be broken, i.e.\ the physical qubits constituting the chains have different values. Therefore some processing needs to be applied to obtain valid solutions. The most simple one is a majority vote, however all postprocessing rules offered by DW are merely heuristic ways of assigning final values to the qubits. It is not guaranteed that a weighting scheme exists which preserves the QUBO and prevents chains from breaking at the same time.

As an example, Fig.~\ref{fig:chains} shows the first four broken chains in a typical DW execution of the MC problem on a $45$ vertex graph. The chain for $x_2$ has more zeros and less ones than the one for $x_7$, yet after the DW postprocessing algorithm was applied, the variables got correct values $x_2=1, x_7=0$ (with apparently PPHa overwriting the inferior DW solution). Our experiments with randomly assigned values to broken chains (see the discussion for PPHa in Section~\ref{sec:solvers}) similarly show that  accurate solutions  obtained for small graphs are often mostly due to the post-processing algorithm rather than the quantum annealing by DW.

\subsection{Graphs of sizes that fit DW}
\label{sec:artig}
In Section \ref{sec:smallgraphs}, we performed experiments with random graphs that can be embedded onto DW. We observed that highly optimized software solvers outperformed DW in terms of speed. This is due to the fact that the largest random graphs we are sure to embed on DW (around 45 vertices) are still comparably small and can hence be solved efficiently with an optimized heuristic. In order to detect a difference between DW and classical solvers, we need to consider larger graphs. In this section we will analyze the behavior of the quantum annealer on subgraphs of DW's chimera graph, i.e.\ the largest graph we can embed on the DW architecture.

\subsubsection{Chimera-like graphs} 
\newcommand{\C}{{\mathcal C}}
\newcommand{\CC}{\overline{\mathcal C}}
Since on small graphs we did not observe any speedup of DW compared to the classical algorithms, we now consider graphs that  fit nicely the DW architecture. The largest graph that fits DW is the Chimera graph $\mathcal C$, and since formulation \eqref{eq:mis2} uses the complement edges, the largest graph that we can solve MC on is the complement of $\C$. Let $\overline{G}$ denote the complement of any graph $G$. Note that the graphs $\C$ and $\CC$ are not interesting for the MC problem since $\C$ is bipartite and hence $\CC$ consists of two disconnected cliques, which  makes MC trivial on this graph.

Consider now the graph $\C_1$ obtained by contracting one random edge from $\C$. An \textit{edge contraction} consists of deleting an edge $(v_1,v_2)$ and merging its endpoints $v_1$ and $v_2$ into a new vertex $v^*$. With $\mathcal{N}_1$ and $\mathcal{N}_2$ the set of neighboring vertices of $v_1$ and $v_2$, the neighbors of $v^*$ are $\mathcal{N}_1\cup \mathcal{N}_2 \setminus \{v_1,v_2\} $. Solving the MC problem on $\CC_1$ requires the embedding of the complement of $\CC_1$ onto DW, which is $\C_1$. The natural embedding of $\C_1$ onto $\C$ maps $v^*$ onto a chain of two vertices and all other vertices of $\C_1$ onto single vertices of $\C$. Moreover, if we add any edge to $\C_1$, the resulting graph will not be embeddable onto $\C$ any more since $\C_1$ already uses all available qubits and edges of $\C$. We can thus say that $\C_1$ is one of the  densest graphs of size $|V|-1$ than  can be embedded onto $\C$.

We can generalize the aforementioned construction to $m$ random edge contractions; the resulting graph $\C_m$ will have $|V| -m$ vertices, will be one of the  densest graphs of size $|V| -m$ that fits onto $\C$, and the chains of such an embedding will be the paths of contracted edges. This family of graphs $\CC_m$ with $0<m<1100$ is therefore a good candidate for the best-case scenario for the MC problem: The $\CC_m$ family are large graphs whose QUBOs can be embedded onto $\C$ and whose solutions of MC are not trivial.

\begin{figure}
  \centering
  \includegraphics[width=0.6\textwidth]{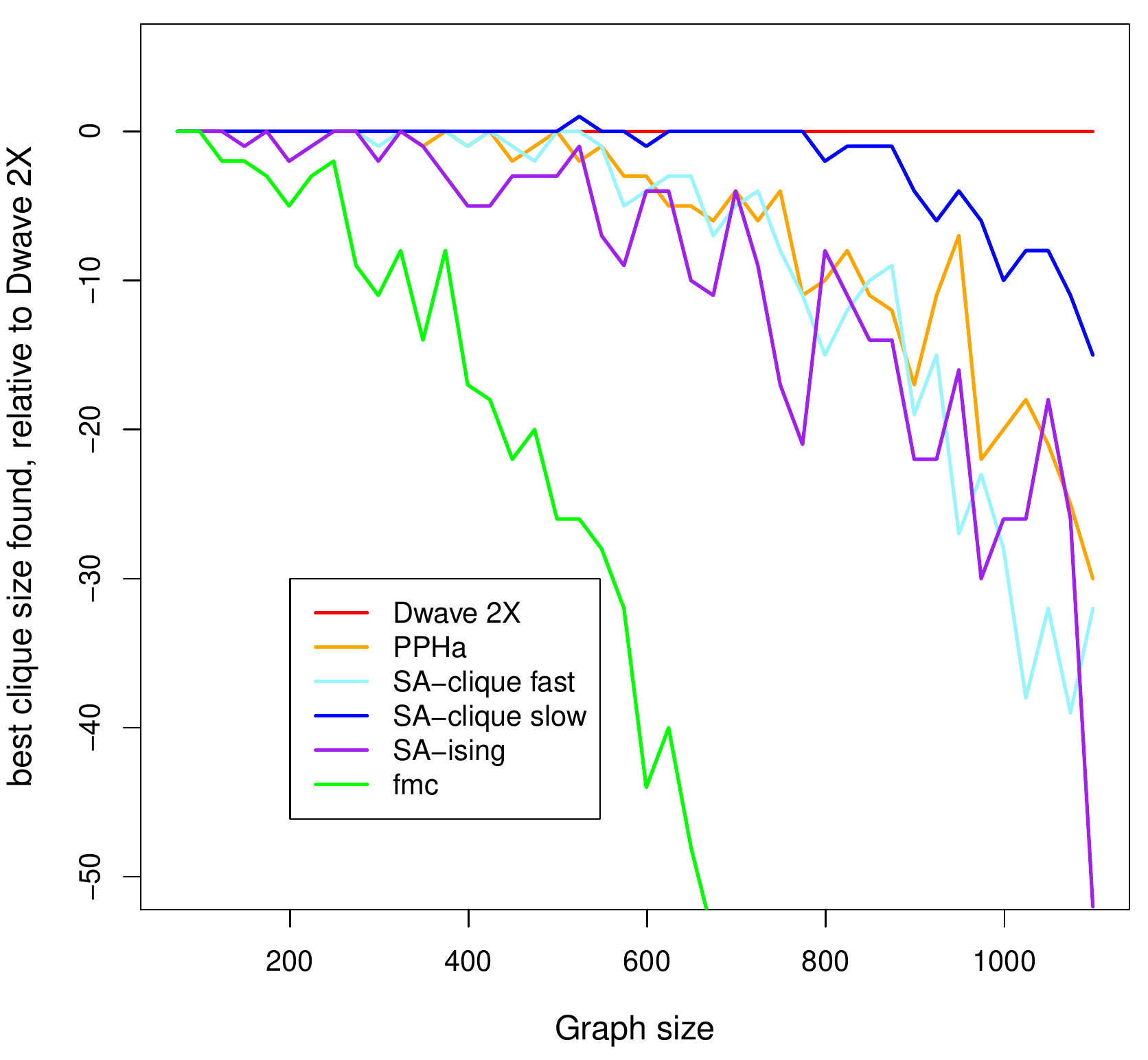}
  \caption{Best clique size found by the different solvers, relatively to the DW result, on the  $C_m$ family of graphs.}
  \label{fig:Cm}
\end{figure}

\subsubsection{Experiments}
We solve the MC problem on the $\CC_m$ family of graphs using DW's Sapi, \textit{PPHa} and the SA-Ising software, SA-clique, and \textit{fmc}.

Fig.~\ref{fig:Cm} shows the result. We observe that for graph sizes up to 400, \textit{PPHa} finds the same result as DW. For these small graphs the problem is likely simple enough to be solved by the post-processing step alone. As expected, the simulated annealing algorithms designed specifically for MC (\textit{fmc, pmc}) are behaving better than the general SA-Ising algorithm. The \textit{fmc} software is run in its heuristic mode. The comparatively lower quality results we obtain with \textit{fmc} could be due to the fact that \textit{fmc} is designed for large sparse graphs but run here on very dense graphs.

For large graphs ($\geq800$ vertices), DW gives the best solution. (Note we do not know if that solution is optimal.)

\subsubsection{Speedup}
Since SA-clique seems to be the best candidate to compete against DW, and moreover since it is considered the classical analogue of quantum annealing, we choose to compute the DW speedup relatively to SA-clique on the $\CC_m$ graph family.

We employ the following procedure: For each graph size, we run DW with 500 anneals and report the best solution. The DW runtime is the total qpu runtime for 500 anneals (approximately $0.15s$). For SA-clique, we start with a low $\alpha$ parameter (i.e., a fast cooling schedule), and gradually increase $\alpha$ until SA-clique finds the same solution as DW. The value of $\alpha$ for which SA-clique finds the same solution as DW gives us the best execution time for SA-clique given the required accuracy. The SA-clique algorithm is run on one CPU core of an Intel E8400 @ 3.00GHz.

Fig.~\ref{fig:speedup} shows the speedup for different graph sizes of the $\CC_m$ family. We observe that DW is slower than SA-clique for graphs with less than 200 vertices. For larger graphs, DW gets exponentially faster, reaching a speedup of the order of a million for graphs with $1000$ vertices. This behavior is not unexpected: For small graphs, optimized software solvers can terminate with runtimes far less than the constant anneal time of DW (see Table~\ref{tab:45g}). The larger the graphs, the more pronounced the advantage of DW is due to the fact that the $\CC_m$ graph instances investigated in this experiment are similar to the topology of DW's native Chimera graph. Further detail is given in the following section.

\begin{figure}
  \centering
  \includegraphics[width=0.6\textwidth]{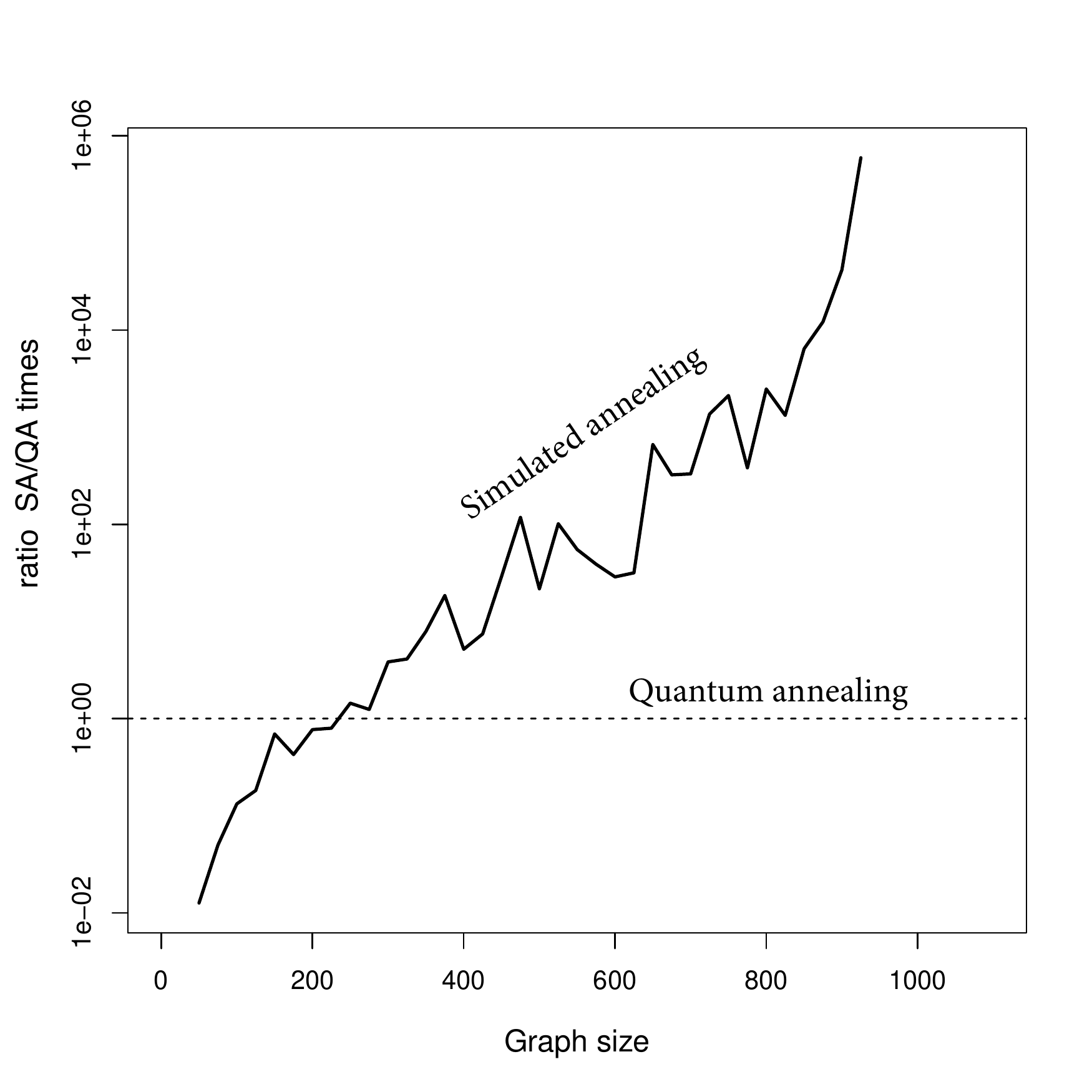}
  \caption{Speedup on artificial graphs designed to fit the Chimera topology.\label{fig:speedup}}
\end{figure}
Overall, our experiments show that for large graphs whose QUBOs can be embedded onto $\C$, DW is able to find very quickly a solution that is \emph{very} difficult to obtain with classical solvers.

\subsubsection{Topology}
In summary, the results of Sections~\ref{sec:smallgraphs} and \ref{sec:artig} demonstrate that the closer the topology of a problem is to the native Chimera graph (Fig.~\ref{fig:chimera}) of the DW chip, the more pronounced the advantage of DW over classical solvers. Moreover, with an increasing problem size, the problem becomes exponentially more difficult for classic solvers, while it takes the same time to run on DW (as long as it can be embedded onto the hardware). Note however, that the larger problem we can fit on DW (with a fixed number of qubits), the smaller average chain length we get. This means that these experiments benefit DW in the comparison with classical solvers twice: on the one hand, the problem becomes much more difficult for the classical solvers due to larger graphs involved; on the other hand, it becomes somewhat easier for DW because the shorter chains improve the accuracy, thereby biasing the results in favor of DW.

\subsection{Using decomposition for large graphs}
\label{sec:anyg}
\begin{figure}
\centering
\includegraphics[width=0.6\textwidth]{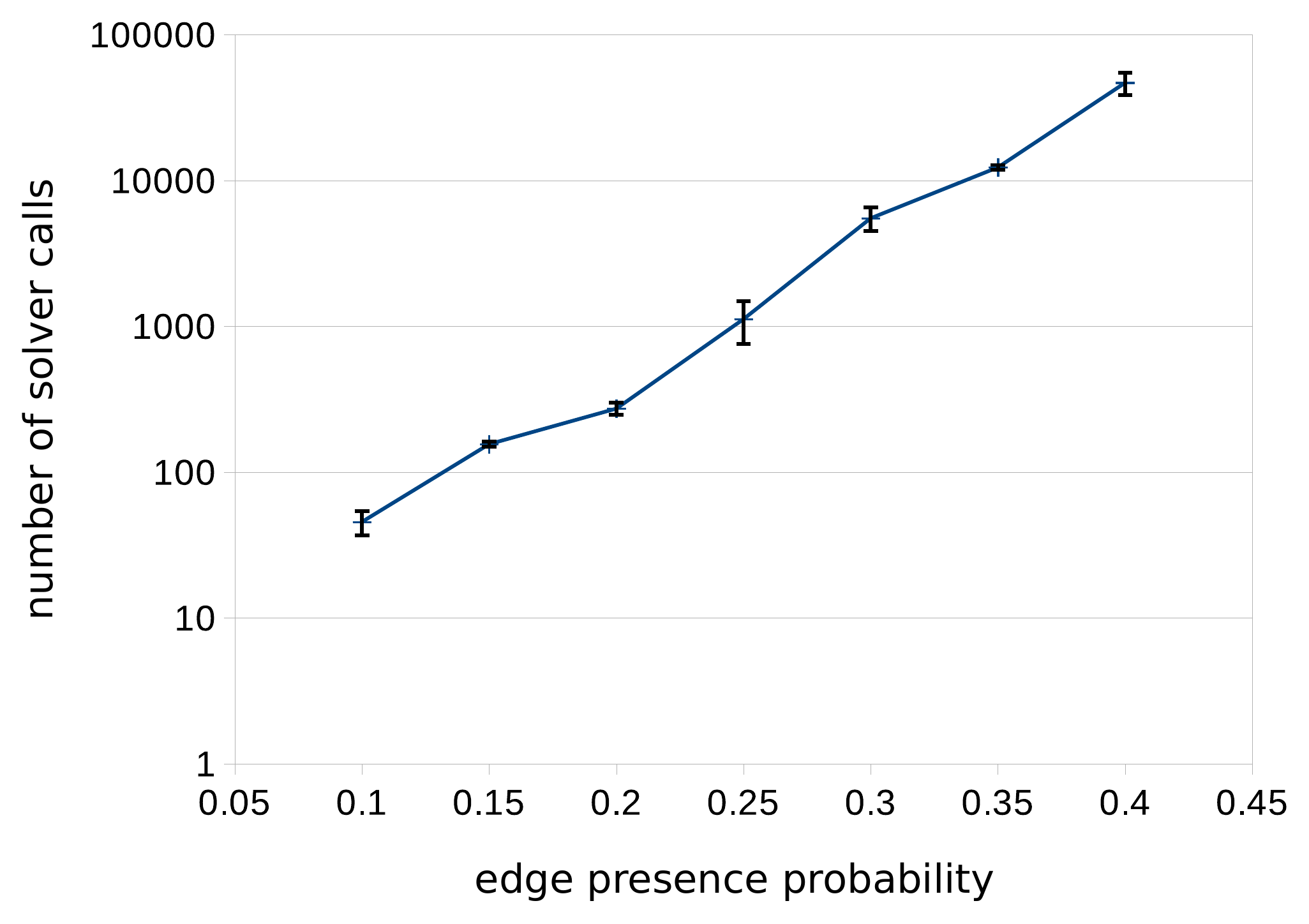}
\caption{Number of solver calls against edge probability. Log scale on the y-axis.\label{fig:largeMC_solvercalls}}
\end{figure}

We investigate some properties of the graph splitting routine of Section \ref{section_large_instances} which enables us to solve MC instances larger than the size that fits onto the DW chip. In this section, we always use our graph splitting routine to divide up the input graphs into subgraphs of $45$ vertices, the largest (complete) graphs that can be embedded on the DW chip.

First, we test our graph splitting routine on random graphs with $500$ vertices and an edge probability (edge density) ranging from $0.1$ to $0.4$ in steps of $0.05$. Fig.~\ref{fig:largeMC_solvercalls} shows the number of generated subgraphs (or equivalently, the number of solver calls) against the edge probability. Each data point is the median value of ten runs, the standard deviation is given as error bars. The number of solver calls seems to follow an exponential trend with respect to the edge probability.

\begin{figure}
  \centering
  \includegraphics[width=0.6\textwidth]{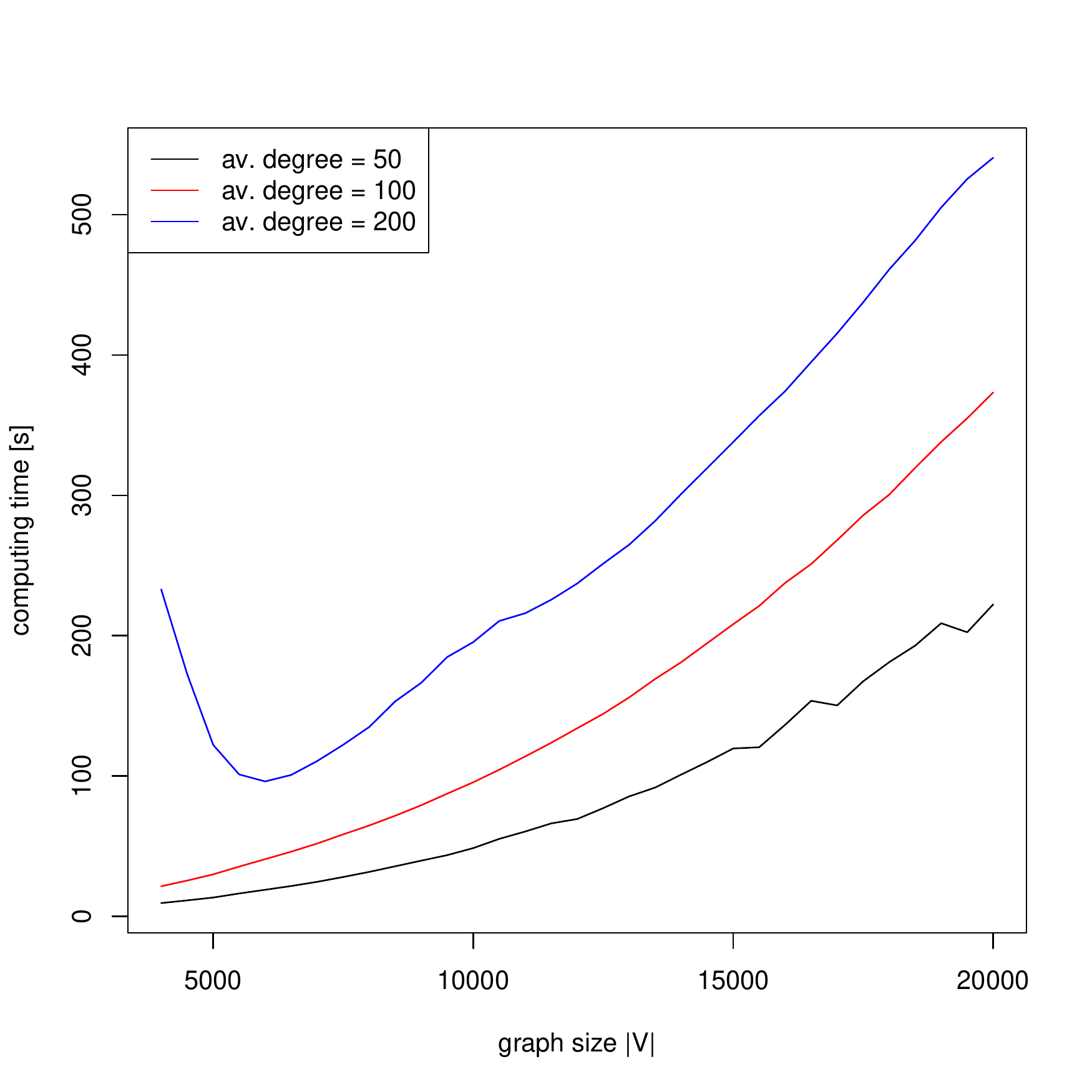}
  \caption{Time of the graph splitting routine as a function of the graph size.\label{fig:largeMC_time}}
\end{figure}

Second, we investigate the scaling of our graph splitting routine with an increasing graph size $|V|$. Since, with a fixed edge probability, graphs become denser (their vertex degrees increase) as their size goes to infinity, we take an alternative approach and fix the average degree $d$ of each vertex: We then generate graphs of size $3000$ to $20,000$ (in steps of $500$) using edge probability $p=d/(|V|-1)$. This ensures that the average vertex degree stays constant as $|V|$ goes to infinity.

We measure both the time $t$ (in seconds) of the graph splitting alone as well as the number $n$ of problems/subgraphs being solved by DW. According to Table~\ref{tab:45g} (column for DW's interface \textit{Sapi}), the time to solve each subgraph on the DW chip is $0.15$ seconds, thus leading to an overall time for computing MC of $t+0.15\cdot n$ seconds.

Fig.~\ref{fig:largeMC_time} shows average timings from $100$ runs for three fixed average degrees $d \in \{50,100,200\}$. We observe that if $d$ is relatively large in comparison to $|V|$ (which, in particular, appears to hold for $|V|\approx 5000$ and $d=200$), the $k$-core and CH-partitioning algorithms are less effective, while the vertex-splitting routine alone produces too many subgraphs,  causing the computing time to get disproportionately high. With increasing the number of vertices, we observe a roughly linear increase of the runtime. As expected, higher average degrees $d$ result in denser graphs and thus higher runtimes.

\begin{table*}
\centering
\begin{tabular}{lll lll}
graph family	& vertices	& parameter	& largest clique	& no.\ subgraphs	& runtime [s]\\
\hline\hline
Hamming	& 128	& 1	& 64	& 0	& 0.09	\\
	& 	& 2	& 32	& 196	& 6.4	\\
	&	& 4	& 4	& 20	& 0.1	\\
	&	& 6	& 2	& 1	& 0.09	\\
\hline
c-fat	& 200	& 1	& 12	& 3	& 0.02	\\
	& 	& 5	& 58	& 1	& 0.53	\\
	& 500	& 1	& 14	& 3	& 0.08	\\
	& 	& 5	& 64	& 2	& 2.1	\\
	& 	& 10	& 126	& 0	& 25.6	\\
\hline
g graph	& 100	& 10	& 1	& 1	& 0.01	\\
	& 200	& 10	& 1	& 1	& 0.01	\\
	& 500 	& 10	& 1	& 1	& 0.07	\\
	& 1000	& 10	& 1	& 1	& 0.26	\\
	& 2000	& 10	& 1	& 1	& 1.1	\\
	& 5000	& 10	& 1	& 1	& 6.8	\\
	& 10000	& 10	& 1	& 1	& 28.3	\\
\hline
U graph	& 1000	& 5	& 7	& 6	& 0.49	\\
	& 	& 10	& 10	& 9	& 0.49	\\
	& 	& 20	& 14	& 11	& 0.57	\\
	& 2000	& 5	& 7	& 7	& 1.9	\\
	& 	& 10	& 11	& 10	& 2.0	\\
	& 	& 20	& 17	& 14	& 2.2	\\
\hline\hline
\end{tabular}
\caption{Graph splitting algorithm applied to a variety of graph families (first column) including their graph parameters (number of vertices in second column, internal parameter in third column). Largest clique found, number of generated subgraphs and overall runtime in seconds for the splitting is reported.\label{tab:graphfamilies}}
\end{table*}

To demonstrate the applicability of our graph splitting routine outside of random graphs, we apply the graph splitting to families of graphs from the \textit{1993 DIMACS Challenge on Cliques, Coloring and Satisfiabilty} \citep{johnson-trick-96}, also used in \cite{Boros2006}. These are \textit{Hamming} and \textit{c-fat} graphs. Both graph families depend on two parameters: the number of vertices $n$ and an additional internal parameter, the Hamming distance $d$ for Hamming graphs and the partition parameter $c$ for c-fat graphs. We use the generation algorithms of \cite{Hasselberg1993} for both graph families. We also employed $g$ and $U$ graphs, defined in \cite{Kim2001} (including their generation mechanism), which have previously been used for graph assessments in \cite{Kim2001,Boros2006}.

Table~\ref{tab:graphfamilies} shows results for all four graph families. We see that for the graph parameters used in the aforementioned studies, our graph splitting algorithm finds a maximum clique (mostly) within a fraction of a second. The number of generated subgraphs along the way varies widely, from none or one subgraph for $g$ graphs to almost two hundred for Hamming graphs.

\begin{figure}
  \centering
  \includegraphics[width=0.7\textwidth]{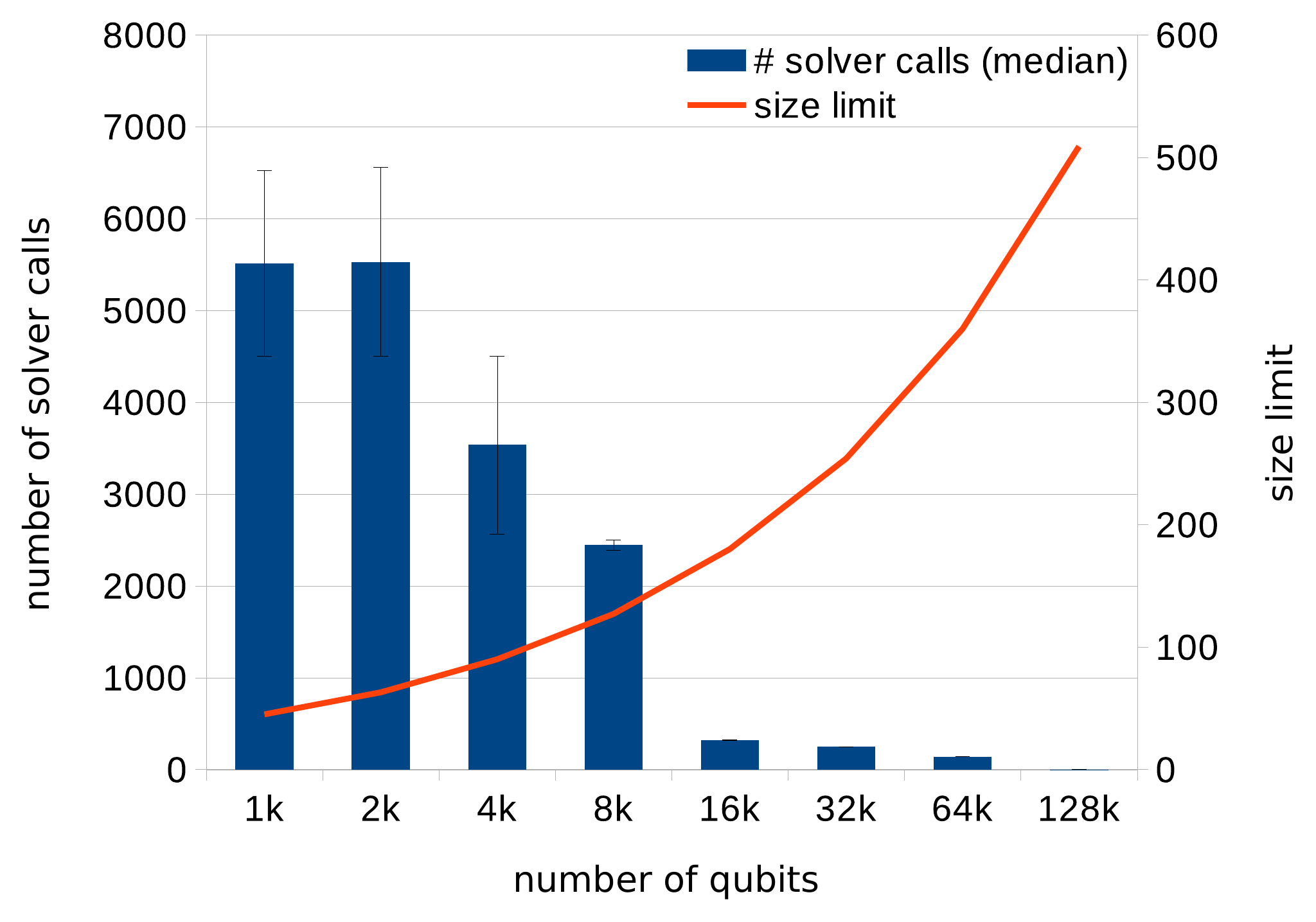}
  \caption{Number of solver calls (left y-axis) and size limit (maximal arbitrary graph embeddable on DW; right y-axis) as a function of the number of qubits.\label{fig:number-of-solver}}
\end{figure}

Lastly, we aim to assess the performance of future generations of DW systems on our clique finding approach for arbitrary large graphs. Essentially, we turn the previous question around: Instead of assessing the graph splitting for a variety of graphs and a fixed DW system, we now look at the evolution of possible future DW machines with an increasing number of qubits and investigate the number of solver calls needed by the graph splitting algorithm (applied to a fixed realization of a random graph with $500$ vertices and edge presence probability $0.3$).

First, assuming a similar Chimera topology for future generations of DW systems, doubling the number of available qubits will increase the size of the maximal complete subgraph that can be embedded by a factor of $\sqrt{2}$. The maximal size of an arbitrary graph embeddable on DW is shown in Fig.~\ref{fig:number-of-solver} in red (right y-axis). If we assume that the number of qubits doubles with each new generation, seven generations of DW machines are required in order to be able to directly embed and solve an arbitrary $500$ vertex graph.

Second, Fig.~\ref{fig:number-of-solver} (blue data line; left y-axis) shows the evolution of the number of solver calls for future DW systems with an increasing number of qubits. We use the envisaged size of the maximal complete subgraph embeddable on future DW machines to set the \textit{lower\_bound} parameter of the graph splitting algorithm. In this experiment we applied the graph splitting algorithm to the fixed graph generated with $500$ vertices and edge presence probability of $0.3$. Each data point is the median of ten runs. The standard deviation of those ten runs is given with error bars. The number of required solver calls of our graph splitting algorithm rapidly decreases in what seems like an exponential trend.

\section{Conclusion}
\label{section_conclusion}
This article evaluates the performance of the DW quantum annealer on maximum clique, an important NP-hard graph problem. We compared DW's solvers to common classical solvers with the aim of determining if current technology already allows us to observe a \textit{quantum advantage} for our particular problem. We summarize our findings as follows.
\begin{enumerate}
  \item The present DW chip capacity of around $1000$ qubits poses a significant limitation on the MC problem instances of general form that can be solved directly with DW. For random graphs with no special structure that are small enough to fit onto DW, the returned solution is of comparable quality to the one obtained by classical methods. Nevertheless the highly optimized classical solvers available are usually faster for such small instances.
  \item Special instances of large graphs designed to fit DW's chimera architecture can be solved orders of magnitude faster with DW than with any classical solvers.
  \item For MC instances that do not fit DW, the proposed decomposition methods offer a way to divide the MC problem into subproblems that fit DW. The solutions of all subproblems can be combined afterwards into an optimal solution of the original problem (assuming DW solves the subproblems optimally, which is usually true, but cannot be guaranteed). Our decomposition methods are highly effective for relatively sparse graphs; however the number of subproblems generated grows exponentially with increasing density. We demonstrate that this issue can be alleviated when/if larger D-Wave machines become available (Fig.~\ref{fig:number-of-solver}).
\end{enumerate}
Overall, we conclude that general problem instances that allow to be mapped onto the DW architecture are typically still too small to show a quantum advantage. But quantum annealing may offer a significant speedup for solving the MC problem, if the problem size is at least several hundred, roughly an order of magnitude larger than what it typically is for general problems that fit D-Wave 2X.

\section*{Acknowledgments}
The authors acknowledge and appreciate the support provided for this work by the Los Alamos National Laboratory Directed Research and Development Program (LDRD). They would also like to thank Dr Denny Dahl for his help while working on the D-Wave 2X machine.


\end{document}